\algnewcommand{\Initialize}[1]{%
	\State \textbf{Initialize:}
	\Statex \hspace*{\algorithmicindent}\parbox[t]{.8\linewidth}{\raggedright #1}
}
\newlist{steps}{enumerate}{1}
\setlist[steps, 1]{label = Step \arabic*:}
\def\BibTeX{{\rm B\kern-.05em{\sc i\kern-.025em b}\kern-.08em
\usepackage{hyperref}
    T\kern-.1667em\lower.7ex\hbox{E}\kern-.125emX}}
\newtheorem{proposition}{Proposition} 
\theoremstyle{remark}
\newcommand{\tUR}{UR}
\newcommand{\tUT}{UT}
\newcommand{\UR}{{\mathrm{r}}}
\newcommand{\UT}{{\mathrm{t}}}
\newcommand{\MN}{{\mathrm{c}}}
\newcommand{\nr}{{n}}
\newcommand{\nrl}{n'}
\newcommand{\NR}{{N_\mathrm{r}}}
\newcommand{\NT}{{N_\mathrm{t}}}
\newcommand{\nm}{p}
\newcommand{\NM}{N}
\newcommand{\UTUR}{\mathrm{tr}} 
\newcommand{\URUT}{\mathrm{rt}}
\newcommand{\ymtest}{\Tilde{\mathbf{y}}_{\nm}}
\newcommand{\phiu}{\bm{\varphi}_{\nr}}
\newcommand{\phiuh}{\bm{\varphi}_{\nr}^{H}}
\newcommand{\phid}{\bm{\Phi}_{\UT}}
\newcommand{\phidh}{\bm{\Phi}_{\UT}^H}
\newcommand{\esp}{\mathbb{E}}
\newcommand{\bmest}{\mathbf{b}_{\nm}}
\newcommand{\omegam}{\Tilde{\bm{\omega}}_{\nm}}
\newcommand{\Frj}{\mathbf{F}_\UR^\mathrm{J}}
\newcommand{\Fmj}{\mathbf{F}^{\mathrm{J}}}
\begin{document}

\title{CSI Acquisition in Cell-Free Massive MIMO Surveillance Systems \vspace{-0.3em}

\author{

\IEEEauthorblockN{Isabella W. G. da Silva, Zahra Mobini, Hien Quoc Ngo, and Michail Matthaiou}

\IEEEauthorblockA{Centre for Wireless Innovation (CWI), Queen's University Belfast, U.K.}

\IEEEauthorblockA{E-mails: \{iwgdasilva01, zahra.mobini, hien.ngo, m.matthaiou\}@qub.ac.uk}

\vspace{-0.45cm}
}

\thanks{This work is a contribution by Project REASON, a UK Government funded project under the Future Open Networks Research Challenge (FONRC) sponsored by the Department of Science Innovation and Technology (DSIT). It was also supported by the U.K. Engineering and Physical Sciences Research Council (EPSRC) (grants No. EP/X04047X/1 and EP/X040569/1). The work of I. W. G. da Silva, Z.~Mobini, and  H.~Q.~Ngo was supported by the U.K. Research and Innovation Future Leaders Fellowships under Grant MR/X010635/1, and a research grant from the Department for the Economy Northern Ireland under the US-Ireland R\&D Partnership Programme. The work of M. Matthaiou was supported by the European Research Council (ERC) under the European Union’s Horizon 2020 research and innovation programme (grant agreement No. 101001331).}}

\maketitle

\begin{abstract}
We consider a cell-free massive multiple-input multiple-output (CF-mMIMO) surveillance system, in which multiple multi-antenna monitoring nodes (MNs) are deployed in either observing or jamming mode to disrupt the communication between a multi-antenna untrusted pair.
We propose a simple and effective channel state information (CSI) acquisition scheme at the MNs. Specifically, our approach leverages pilot signals in both the uplink and downlink phases of the untrusted link, coupled with minimum mean-squared error (MMSE) estimation. This enables the MNs to accurately estimate the effective channels to both the untrusted transmitter (UT) and untrusted receiver (UR), thereby yielding robust monitoring performance. We analyze the spectral efficiency (SE) performance of the untrusted links and of the monitoring
system, taking into account the proposed CSI acquisition
and successive MMSE cancellation schemes. The monitoring success probability (MSP) is then derived. Simulation results show that the CF-mMIMO surveillance system, relying on the proposed CSI acquisition scheme, can achieve monitoring performance close to that achieved by having perfect CSI knowledge of the untrusted link (theoretical upper bound), especially when the number of MNs is large.


\end{abstract}
\begin{IEEEkeywords}
Cell-free massive MIMO, channel estimation, imperfect CSI, physical-layer security, surveillance system.
\end{IEEEkeywords}

\section{Introduction}\label{sec:Introduction}

As infrastructure-free wireless systems, such as device-to-device (D2D) and mobile ad-hoc communication paradigms advance, a novel topic within the space of physical-layer security (PLS), denoted as proactive monitoring, has garnered remarkable attention~\cite{8726325,Zahra:TIFS:2019}. 
Proactive monitoring involves legitimate nodes assuming the role of a monitor (or eavesdropper) to actively monitor unauthorized or malicious users who aim to exploit the wireless systems for illegal activities, cybercrime, or to jeopardize public safety~\cite{9187662,9673103}. 
In~\cite{9187662}, Feizi \textit{et al.} evaluated a  surveillance system comprised of one multi-antenna full-duplex (FD) MN, a pair of untrusted users, and a scheduled downlink user, where the transmit precoder and receive beamformer are optimized to maximize the eavesdropping non-outage probability. In~\cite{9673103}, Xu and Zhu studied a proactive monitoring scheme in which one legitimate FD MN observes multiple untrusted pairs and sends jamming or constructive signals to interfere with the communication between the untrusted links. 
In a recent study~\cite{paperzahra}, the CF-mMIMO infrastructure has been employed as a promising solution to enhance the monitoring capabilities within wireless surveillance frameworks. CF-mMIMO can offer high macro-diversity and ubiquitous coverage~\cite{7827017}. In addition, it also enables a virtual FD mode even relying on half-duplex (HD) MNs. Using HD MNs instead of FD ones makes the monitoring system more cost-effective and less prone to self-interference. 

Nonetheless, a popular assumption in the current literature on wireless surveillance systems is that there is global CSI knowledge of all the untrusted links at the MNs and/or central processing unit (CPU). However, in practice, the MNs/CPUs typically have access only to imperfect CSI. Assuming perfect CSI for proactive monitoring is unrealistic and greatly affects the probability of successful monitoring. 
In~\cite{9381240}, the impact of channel uncertainty on proactive surveillance was investigated. In this work,  the authors formulated an optimization problem to enhance the surveillance performance under a covert constraint and showed that the uncertainty of the links can highly impact surveillance performance. In~\cite{10032289},  the authors analyzed the performance of a multi-antenna proactive monitoring system under imperfect instantaneous CSI knowledge of the untrusted link. Nevertheless,  a significant drawback of these studies is the lack of understanding on how to acquire the CSI of the untrusted links. 

Motivated by all the above,  we consider a CF-mMIMO surveillance system that monitors a pair of multi-antenna untrusted users using multiple multi-antenna MNs and propose an efficient CSI acquisition scheme. In our considered system, the MNs are HD and operate in either observing or jamming mode. Specifically, a group of MNs overhears the untrusted messages from the UT, while the remaining MNs send jamming signals to disrupt the UR.

The main contributions of our paper are:
\begin{itemize}
    \item We propose a complete transmission protocol for CF-mMIMO surveillance systems designed to monitor a pair of multi-antenna untrusted users with a simple and effective CSI acquisition approach. In our CF-mMIMO surveillance system, by leveraging pilot signals during both the uplink and downlink phases of the untrusted link and MMSE estimation techniques, the MNs estimate the effective channels to both the UT and UR. These channel estimates empower the MNs to design suitable methods to enhance the monitoring performance. Note that previous works either considered perfect CSI (e.g, \cite{9673103}) or imperfect CSI without specific channel estimation schemes (e.g, \cite{paperzahra, 9381240, 10032289}).   
    \item We derive analytical expressions for the SEs of the untrusted links and at the monitoring system, taking into account the proposed CSI acquisition and successive MMSE cancellation schemes. The MSP  is then derived. Specifically, we  investigate the monitoring performance of the CF-mMIMO surveillance system under two CSI knowledge scenarios;  namely \emph{1)} imperfect CSI knowledge at both the MNs and CPU \emph{2)} imperfect CSI knowledge at the MNs and no CSI knowledge at the CPU. 
    \item Numerical results demonstrate that CF-mMIMO surveillance systems, using the proposed CSI acquisition scheme, can offer competitive monitoring performance compared with an ideal system with perfect CSI knowledge of the untrusted communication link at both the MNs and CPU, regardless of the precoding design chosen for the untrusted transmission link.
\end{itemize}

\textit{Notation. } Throughout this paper, bold upper-case letters denote matrices whereas bold lower-case letters denote vectors; $(\cdot)^T$ and $(\cdot)^H$ stand for the matrix transpose and Hermitian transpose, respectively; $\mathbf{I}_M$ is the identity matrix, with size $M$; $||\cdot||$ and $|\cdot|$ are the Euclidean-norm and the absolute value operator; $\esp\{\cdot\}$ is the expectation operator, and $\operatorname{Var}(a)\triangleq\esp\left\{|a - \esp\{a\}|^2\right\}$. Finally, a zero mean circular symmetric complex Gaussian vector $\mathbf{x}$ with covariance matrix $\mathbf{C}$ is denoted by $\mathbf{x}\sim\mathcal{CN}(\mathbf{0}, \mathbf{C})$.

\section{System Model}
\begin{figure}[t]
    \centering
    \includegraphics[scale=0.3]{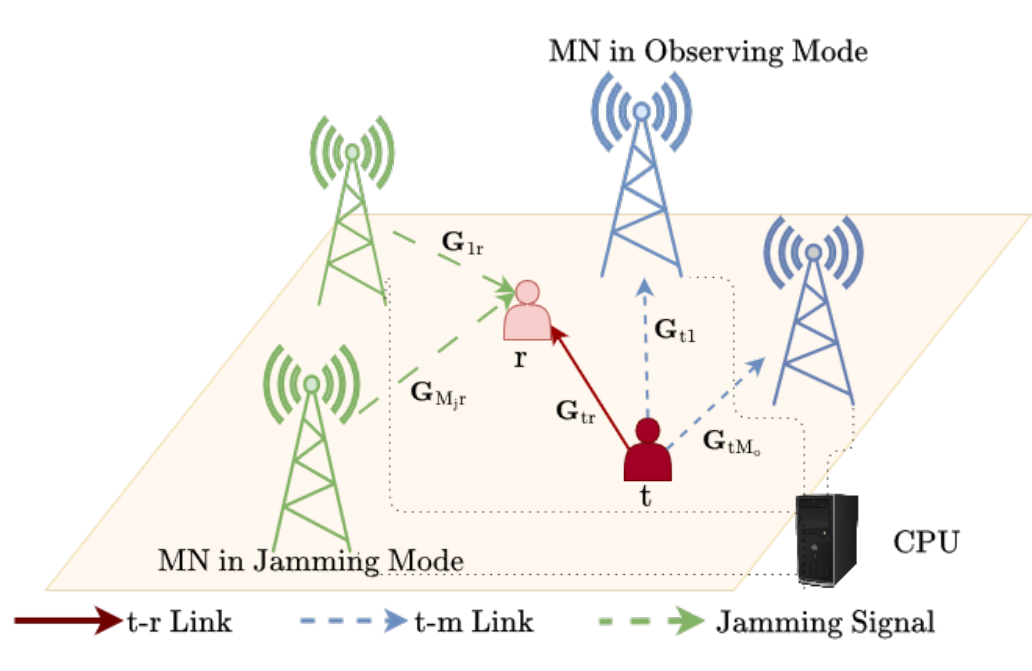}
    \caption{System model of a CF-mMIMO surveillance system with MNs operating in observing mode or jamming mode.}
    \label{fig:enter-label}
\end{figure}

As illustrated in Fig.~\ref{fig:enter-label}, the considered CF-mMIMO surveillance system comprises $M$ MNs and a communication untrusted pair. The MNs are equipped with $\NM$ antennas, while the  \tUT~and   \tUR \footnote{Throughout this work, we use the subscripts $\UT$, $\UR$, and  $\MN$ to refer to \tUT, \tUR, and  CPU, respectively.} are equipped with $\NT$ and $\NR$ antennas, respectively. We further assume reciprocity-based time division duplex (TDD) operation so that the uplink and downlink transmissions occur at different times but use the same frequency. The processing of the MNs is done in a centralized manner by a CPU, with the MNs being completely synchronized. Moreover, all nodes operate in HD mode. To monitor the untrusted pair, the MNs can switch between the observing mode, where they receive untrusted messages from \tUT, and jamming mode, where they send jamming signals to disrupt \tUR. We define $\alpha_m$ as a binary variable to indicate the operation assignment of each MN $m$, such that  MN $m$ operates in the jamming mode when $\alpha_m=0$  or it operates in the observing mode when $\alpha_m=1$. 

For the untrusted link, the \tUR~typically first transmits uplink pilots to the \tUT, enabling \tUT~to estimate the channel to \tUR. These channel estimates at the \tUT~are then used to define the precoder for transmitting the information signal to \tUR. On the other hand, to accurately detect the signals transmitted from the \tUT, the \tUR~requires knowledge of the channel gains, which can be obtained through beamforming downlink schemes~\cite{6736537}. Accordingly, by leveraging pilot signals transmitted during both the uplink training and downlink beamforming training phases of the untrusted link, the MNs can estimate the channels to both the \tUR~and \tUT. These channel estimates empower the MNs to design suitable methods to enhance their monitoring performance. The details of the uplink training, beamforming training, and downlink data transmission phases are provided in the next subsections.
\subsection{Uplink Training}\label{sec:upt}
During the uplink training, pilot sequences of length $\tau_\UR$ are sent from the \tUR~to \tUT, where the pilot sequence transmitted from the $\nr$th antenna of the \tUR~is denoted by $\phiu\in\mathbb{C}^{\tau_{\UR}\times1}$, with $||\phiu||^2=1$, $\forall\nr=1,\dots,\NR$. All pilot sequences are considered pairwise orthogonal, i.e., $\phiuh\bm{\varphi}_{\nrl}=0$ if $\nr\neq\nrl$. Thus, it is required that $\tau_{\UR}\geq\NR$. Simultaneously, the MNs also receive the pilot signals transmitted from the \tUR. Thus, the received $\NT\times\tau_{\UR}$ pilot matrix at the \tUT, and the $\NM\times\tau_{\UR}$ pilot matrix  at the $m$th MN are given by~\cite{9392380}
\vspace{-0.5em}
\begin{align}
   \mathbf{Y}_{\UTUR}&=\sqrt{\tau_{\UR} \rho_\UR} \sum_{\nr=1}^{\NR} \mathbf{g}_{\UTUR, \nr} \phiuh+\bm{\Omega}_{\UT},\\
    \mathbf{Y}_{m\UR}&=\sqrt{\tau_{\UR} \rho_\UR} \sum_{\nr=1}^{\NR} \mathbf{g}_{m\UR, \nr} \phiuh+\bm{\Omega}_{m},
\end{align}
respectively, where $\rho_\UR$ is the normalized power of each pilot symbol
transmitted by \tUR, while $\bm{\Omega}_{\UT}\in\mathbb{C}^{\NT\times\tau_{\UR}}$ and $\bm{\Omega}_{m}\in\mathbb{C}^{\NM\times\tau_{\UR}}$ are Gaussian matrices with independent and identically distributed (i.i.d.) $\mathcal{CN}(0,1)$ entries. Also, $\mathbf{g}_{\UTUR, \nr}$ and $\mathbf{g}_{m\UR, \nr}$ are the $\nr$th column of the channel matrix from the \tUR~to \tUT,   denoted by $\mathbf{G}_{\UTUR}$, and the $\nr$th column of the channel matrix from the \tUR~to the $m$th MN  denoted by $\mathbf{G}_{m\UR}$, respectively. 
Also, $\mathbf{G}_{\UTUR}\in\mathbb{C}^{\NT\times \NR}$ and $\mathbf{G}_{m\UR}\in\mathbb{C}^{\NM\times \NR}$ are modeled as
\begin{align}\label{eq:channeldef}
\mathbf{G}_{\UTUR}&=\sqrt{\beta_{\UTUR}}\mathbf{H}_{\UTUR},\\
\mathbf{G}_{m\UR}&=\sqrt{\beta_{m\UR}}\mathbf{H}_{m\UR},
\end{align}
respectively, where $\beta_{\UTUR}$ and $\beta_{m\UR}$ are large-scale fading coefficients, and $\mathbf{H}_{\UTUR}$ and $\mathbf{H}_{m\UR}$ are the associated small-scale fading matrices between the \tUR~and \tUT~and between the \tUR~and the $m$th MN, respectively. The entries of the small-scale fading matrices are assumed i.i.d. $\mathcal{CN}(0,1)$. The pilot sequences received at the \tUT~and at the $m$th MN are projected onto $\mathbf{Y}_{\UTUR}$ and $\mathbf{Y}_{m\UR}$, allowing the channel vector for each antenna of \tUR~to be estimated by \tUT~and MN $m$, as
\begin{align}
\Tilde{\mathbf{y}}_{\UTUR,\nr} & \triangleq\mathbf{Y}_{\UTUR} \phiu
=\sqrt{\tau_{\UR} \rho_{\UR}} \mathbf{g}_{\UTUR, \nr}+\bm{\Omega}_{\UT}\phiu, \label{eq:yRT}\\
\Tilde{\mathbf{y}}_{m\UR,\nr} & \triangleq\mathbf{Y}_{m\UR} \phiu
=\sqrt{\tau_{\UR} \rho_{\UR}} \mathbf{g}_{m\UR, \nr}+\bm{\Omega}_{m}\phiu. \label{eq:yRm}
\end{align}
Based on \eqref{eq:yRT} and \eqref{eq:yRm}, we can attain the estimates of the channel responses $\mathbf{g}_{\UTUR, \nr}$ and $\mathbf{g}_{m\UR, \nr}$ via a linear MMSE approach as
\begin{align}\label{eq:hatg}
    \hat{\mathbf{g}}_{\UTUR, \nr}\!&=\!\esp\left\{\mathbf{g}_{\UTUR, \nr}\Tilde{\mathbf{y}}_{\UTUR,\nr}^H\right\}\!\left(\esp\left\{\Tilde{\mathbf{y}}_{\UTUR,\nr}\Tilde{\mathbf{y}}_{\UTUR,\nr}^H\right\}\right)^{-1}\Tilde{\mathbf{y}}_{\UTUR,\nr}\nonumber\\
    &=\gamma_{\UTUR}\Tilde{\mathbf{y}}_{\UTUR,\nr},\\
    \hat{\mathbf{g}}_{m\UR, \nr}\!&=\!\esp\left\{\mathbf{g}_{m\UR, \nr}\Tilde{\mathbf{y}}_{m\UR,\nr}^H\right\}\!\left(\esp\left\{\Tilde{\mathbf{y}}_{m\UR,\nr}\Tilde{\mathbf{y}}_{m\UR,\nr}^H\right\}\right)^{-1}\Tilde{\mathbf{y}}_{m\UR,\nr}\nonumber\\
    &=\gamma_{m\UR}\Tilde{\mathbf{y}}_{m\UR,\nr},
\end{align}
where $\gamma_{\UTUR}\triangleq\frac{\sqrt{\tau_{\UR}\rho_{\UR}}\beta_{\UTUR}}{\tau_{\UR}\rho_{\UR}\beta_{\UTUR}+1}$ and $\gamma_{m\UR}\triangleq\frac{\sqrt{\tau_{\UR}\rho_{\UR}}\beta_{m\UR}}{\tau_{\UR}\rho_{\UR}\beta_{m\UR}+1}$. 
\subsection{Beamforming Training}
In this phase, the \tUT~beamforms the pilots using a precoding matrix derived from the channel estimate of \tUR~obtained during the uplink training phase,  $\mathbf{W}=\Big[\frac{\mathbf{w}_{1}}{||\mathbf{w}_1||},\dots,\frac{\mathbf{w}_{\NR}}{||\mathbf{w}_\NR||}\Big]$, $\mathbf{W} \in\mathbb{C}^{\NT\times\NR}$, with $\frac{\mathbf{w}_{\nr}}{||\mathbf{w}_\nr||}$, $\nr=1,\dots,\NR$, being the intended normalized $N_{\UT}\times1$ precoder vector for each antenna of the \tUR. 
 Let $\sqrt{\tau_{\UT}\rho_{\UT}}\phid\in\mathbb{C}^{\NR\times \tau_{\UT}}$ be the pilot sequence from the \tUT~to \tUR, with $\tau_{\UT}$ being the duration (in symbols) of the beamforming training, and $\rho_{\UT}$ being the maximum normalized transmit power at \tUT.  We assume that $\tau_{\UT}\geq\NR$, while the rows of $\phid$ are pairwise orthogonal, i.e., $\phid\phidh=\mathbf{I}_\NR$. Hence, the received pilot matrix at the \tUR~$\in \mathbb{C}^{\NR \times \tau_\UT}$, and at the $m$th MN $\in \mathbb{C}^{\NM \times \tau_\UT}$ are given by 
\begin{align}
    \mathbf{Y}_{\URUT}^T &= \sqrt{\tau_{\UT}\rho_{\UT}}\mathbf{G}_{\UTUR}^H\mathbf{W}\phid + \bm{\Omega}_{\UR}^T,\\
    \mathbf{Y}_{m\UT}^T &= \sqrt{\tau_{\UT}\rho_{\UT}}\mathbf{G}_{\UT m}^H\mathbf{W}\phid + \bm{\Omega}_{m}^T,
\end{align}
respectively, where $\mathbf{G}_{\UT m}\in\mathbb{C}^{\NT\times \NM}$ is the channel response between MN $m$ and \tUT, described as in \eqref{eq:channeldef}. As discussed in~\cite{6736537}, we can project $\phid$ onto $\mathbf{Y}_{\URUT}$ and $\mathbf{Y}_{m\UT}$, and use it to estimate the effective channels. Accordingly, 
\begin{align}
    \Tilde{\mathbf{Y}}_{\URUT}^T&\triangleq\mathbf{Y}_{\URUT}^T\phidh=\sqrt{\tau_{\UT}\rho_{\UT}}\mathbf{G}_{\UTUR}^H\mathbf{W}+\Tilde{\bm{\Omega}}_{\UR}^T,\label{eq:yrtp}\\
    \Tilde{\mathbf{Y}}_{m\UT}^T&\triangleq\mathbf{Y}_{m\UT}^T\phidh=\sqrt{\tau_{\UT}\rho_{\UT}}\mathbf{G}_{\UT m}^H\mathbf{W}+\Tilde{\bm{\Omega}}_{m}^T,\label{eq:ymtp}
\end{align}
where $\Tilde{\bm{\Omega}}_{\UR}^T\triangleq\bm{\Omega}_{\UR}^T\phidh$ and $\Tilde{\bm{\Omega}}_{m}^T\triangleq\bm{\Omega}_{m}^T\phidh$. Let us define $\mathbf{A}_\UR\triangleq\mathbf{G}_{\UTUR}^H\mathbf{W}$, with entries given by $a_{\nr,\nrl}\triangleq\mathbf{g}_{\UTUR,\nr}^H \mathbf{w}_{\nrl}$, and $\mathbf{B}_m=[\mathbf{b}_{1},\dots,\mathbf{b}_{\NM}]^H$ with $\bmest^H\triangleq\mathbf{g}_{\UT m,\nm}^H\mathbf{W}$, $\forall~p=1,\dots,\NM$. Then, the MMSE channel estimate of $\bmest$ is given according to the following proposition. 

\begin{proposition}
Let $\ymtest$ denote the $p$th column of $ \Tilde{\mathbf{Y}}_{m\UT}$, the MMSE channel estimate of $\bmest$ is written as
\begin{align}
    \hat{\mathbf{b}}_{\nm}&=\esp\left\{\bmest\right\}+\sqrt{\tau_\UT\rho_\UT}\mathbf{C}_{\bmest,\bmest}\left(\tau_\UT\rho_\UT\mathbf{C}_{\bmest,\bmest}+\mathbf{I}_{\NR}\right)^{-1}\nonumber\\
    &\times\left(\ymtest-\sqrt{\tau_{\UT} \rho_{\UT}} \esp\left\{\bmest\right\}\right),\label{eq:hatbm}
\end{align}
\end{proposition}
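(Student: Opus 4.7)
The plan is to derive the LMMSE estimator directly from the standard formula
\[
\hat{\mathbf{b}}_{\nm}=\esp\{\bmest\}+\mathbf{C}_{\bmest,\ymtest}\,\mathbf{C}_{\ymtest,\ymtest}^{-1}\bigl(\ymtest-\esp\{\ymtest\}\bigr),
\]
after first reducing the matrix observation in \eqref{eq:ymtp} to a clean vector model for a single $\nm$. Note that $\bmest=\mathbf{W}^{H}\mathbf{g}_{\UT m,\nm}$ is in general non-Gaussian because $\mathbf{W}$ is obtained by column-normalising the uplink estimate $\hat{\mathbf{g}}_{\UTUR,\nr}$; hence I aim at the linear MMSE rather than the conditional mean, and the orthogonality principle suffices.

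First, I would isolate the appropriate column of $\Ydest$. From \eqref{eq:ymtp}, the row of $\Tilde{\mathbf{Y}}_{m\UT}^T$ indexed by $\nm$ equals $\sqrt{\tau_{\UT}\rho_{\UT}}\,\mathbf{g}_{\UT m,\nm}^{H}\mathbf{W}+[\Tilde{\bm{\Omega}}_{m}^T]_{\nm,:}$, which after transposition yields the scalar-linear model
\[
\ymtest=\sqrt{\tau_{\UT}\rho_{\UT}}\,\bmest+\omegam,
\]
where $\omegam$ collects the effect of the projected noise on that row. The key preliminary step is to verify that $\omegam\sim\mathcal{CN}(\mathbf{0},\mathbf{I}_{\NR})$, which follows from $\phid\phidh=\mathbf{I}_{\NR}$ applied to the i.i.d.\ $\mathcal{CN}(0,1)$ entries of $\bm{\Omega}_m$, together with the independence of $\omegam$ and $\bmest$ (the latter depends only on $\mathbf{G}_{\UT m}$, on $\mathbf{G}_{\UTUR}$, and on the uplink-training noise of Section~\ref{sec:upt}, all independent of $\bm{\Omega}_m$).

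Given this vector model, the remaining algebra is routine. The first moment is $\esp\{\ymtest\}=\sqrt{\tau_{\UT}\rho_{\UT}}\,\esp\{\bmest\}$ since $\omegam$ is zero-mean. The cross-covariance is
\[
\mathbf{C}_{\bmest,\ymtest}=\sqrt{\tau_{\UT}\rho_{\UT}}\,\mathbf{C}_{\bmest,\bmest},
\]
because $\omegam$ is independent of $\bmest$. The auto-covariance is
\[
\mathbf{C}_{\ymtest,\ymtest}=\tau_{\UT}\rho_{\UT}\,\mathbf{C}_{\bmest,\bmest}+\mathbf{I}_{\NR},
\]
using $\esp\{\omegam\omegam^{H}\}=\mathbf{I}_{\NR}$. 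Substituting these three expressions into the LMMSE formula gives \eqref{eq:hatbm}.

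The only genuinely delicate point is justifying that restricting attention to the single column $\ymtest$ loses no information about $\bmest$. This holds because in \eqref{eq:ymtp} distinct rows of $\Tilde{\mathbf{Y}}_{m\UT}^T$ carry disjoint $\mathbf{b}$-indices, so the $\nm$th row is a sufficient linear observation for $\bmest$; the other rows add no linear information beyond what their own covariance structure already captures, and can be ignored in an LMMSE analysis of $\bmest$ alone. Establishing this, together with the noise covariance after projection, is the main (and essentially the only non-mechanical) obstacle in the argument.
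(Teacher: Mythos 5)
Your proposal is correct and follows essentially the same route as the paper's Appendix~A: apply the standard LMMSE formula to the single-column model $\ymtest=\sqrt{\tau_{\UT}\rho_{\UT}}\,\bmest+\omegam$, compute $\mathbf{C}_{\bmest,\ymtest}=\sqrt{\tau_{\UT}\rho_{\UT}}\,\mathbf{C}_{\bmest,\bmest}$ and $\mathbf{C}_{\ymtest,\ymtest}=\tau_{\UT}\rho_{\UT}\,\mathbf{C}_{\bmest,\bmest}+\mathbf{I}_{\NR}$ using the independence and unit covariance of the projected noise, and substitute. Your added remarks on the non-Gaussianity of $\bmest$ (hence LMMSE rather than conditional mean) and on the sufficiency of the single column are sensible refinements the paper leaves implicit, but they do not change the argument.
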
 
\begin{proof}\label{proof1}
 The proof is provided in 
 Appendix A.
\end{proof}
\subsection{Untrusted Data Transmission}
 Let $\mathbf{x}_{\URUT}\in\mathbb{C}^{\NR\times1}$, with $\esp\left\{\mathbf{x}_{\URUT}\mathbf{x}_{\URUT}^H\right\}=\mathbf{I}_\NR$, be the symbol vector intended to \tUR. \tUT~uses the channel estimate obtained in the uplink training phase to 
precode the symbols, and then it transmits the precoded signal
vector to the \tUR. The transmitted signal from the \tUT~is written as
\begin{align}\label{eq:st}
    \mathbf{s}_{\UT} = \sqrt{\rho_{\UT}}\mathbf{W}\bm{\Lambda}_\UR^{1/2}\mathbf{x}_{\URUT},
\end{align}
where $\bm{\Lambda}_\UR$ is a diagonal matrix whose diagonal elements are $\lambda_1,\dots,\lambda_\NR$, set to satisfy $\esp\left\{||\mathbf{s}_{\UT}||^2\right\}=\rho_{\UT}$.
At the same time, the MNs operating in the jamming mode send jamming signals to interfere with the untrusted communication link. Let $\mathbf{x}^\mathrm{J}\in\mathbb{C}^{\NR\times1}$, with $\esp\left\{(\mathbf{x}^\mathrm{J})(\mathbf{x}^{\mathrm{J}})^H\right\}=\mathbf{I}_\NR$, denote the jamming symbol intended to the \tUR. We assume that the MNs employ a maximum-ratio (MR) precoding technique for jamming transmission to maximize the jamming power received at \tUR. Thus,  the signal vector transmitted by the $m$th MN in the jamming mode, $\mathbf{s}_{m}^\mathrm{J} \in\mathbb{C}^{\NM\times1}$, can be written as
\begin{align}\label{eq:sigj}
    \mathbf{s}_{m}^\mathrm{J} = (1-\alpha_m)\sqrt{\rho_\mathrm{J}}\hat{\mathbf{G}}_{m\UR}\bm{\Pi}_{m\UR}^{1/2}\mathbf{x}^\mathrm{J},
\end{align}
where $\rho_\mathrm{J}$ is the maximum normalized transmit power at the MNs in jamming mode, $\bm{\Pi}_{m\UR}$ is a diagonal matrix with diagonal elements given by $\pi_{m,1},\dots,\pi_{m,\NR}$, chosen to satisfy $\esp\{||\mathbf{s}_{m}^\mathrm{J}||^2\}\leq\rho_\mathrm{J}$  for each MN in jamming mode. Thus, given the transmitted signal $\mathbf{s}_{\UT}$ in \eqref{eq:st} and $\mathbf{s}_{m}^\mathrm{J}$ in \eqref{eq:sigj}, the received signal at the \tUR, and at the $m$th MN in observing mode, are written, respectively, as
\vspace{-1em}
\begin{align}
    \mathbf{y}_{\UR} &= \alpha_m\mathbf{G}_{\UTUR}^H\mathbf{s}_{\UT}+\sum_{m=1}^M\mathbf{G}_{m\UR}^H \mathbf{s}_{m}^\mathrm{J}+\bm{\omega}_{\UR}\nonumber\\
    &=\alpha_m\sqrt{\rho_{\UT}}\mathbf{G}_{\UTUR}^H\mathbf{W}\bm{\Lambda}_\UR^{1/2}\mathbf{x}_{\URUT}+\sum_{m=1}^M\!(1-\alpha_m)\sqrt{\rho_\mathrm{J}}\nonumber\\
    &\times\mathbf{G}_{m\UR}^H\hat{\mathbf{G}}_{m\UR}\bm{\Pi}_{m\UR}^{1/2}\mathbf{x}^\mathrm{J}+\bm{\omega}_{\UR},\label{eq:sigUR}\\
    \mathbf{y}_{m} &= \alpha_m\mathbf{G}_{\UT m}^H\mathbf{s}_{\UT}+\sum_{m'=1}^M\mathbf{G}_{mm'}^H \mathbf{s}_{m'}^\mathrm{J}+\bm{\omega}_{m}\nonumber\\
    &= \alpha_m\sqrt{\rho_{\UT}}\mathbf{G}_{\UT m}^H\mathbf{W}\bm{\Lambda}_\UR^{1/2}\mathbf{x}_{\URUT}+\sum_{m'=1}^M\!(1-\alpha_{m'})\sqrt{\rho_\mathrm{J}}\nonumber\\
    &\times\mathbf{G}_{mm'}^H\hat{\mathbf{G}}_{m'\UR}\bm{\Pi}_{m'\UR}^{1/2}\mathbf{x}^\mathrm{J}+\bm{\omega}_{m},\label{eq:sigmt}
\end{align}
where $\bm{\omega}_{\UR}$  and $\bm{\omega}_{m}$ are the $\NR\times1$  and $\NM\times1$ noise vectors at the \tUR~and at $m$th MN, respectively. Also, $\mathbf{G}_{mm'}$ denotes the channel matrix between MN $m$ and MN $m'$. The elements of $\mathbf{G}_{mm'}$ are assumed i.i.d. $\mathcal{CN}(0, \beta_{mm'})$ for $m'\neq m$, whereas for $m'=m$, $\mathbf{G}_{mm'}=\mathbf{0}$, $\forall m$. 
The effective channel estimate  ${\mathbf{B}}_m$ is combined by the $m$th MN  to detect $\mathbf{x}_{\URUT}$. Specifically,   an MMSE combining matrix is designed as
\begin{align}\label{eq:vmtil}
    {\mathbf{V}}_m=\hat{\mathbf{B}}_m\big(\hat{\mathbf{B}}_m^H\hat{\mathbf{B}}_m+\varrho\mathbf{I}_{\NR}\big)^{-1},
\end{align}
where $\varrho$ is the per stream signal-to-noise ratio (SNR). Finally, the aggregated received signal   for observing the untrusted link at the CPU can be obtained as
\vspace{0.5cm}
\begin{align}\label{eq:tildexrt}
    {\mathbf{z}}_{\MN}&=\sum_{m=1}^{M}\alpha_m{\mathbf{V}}_m^H\mathbf{y}_{m}
    \nonumber\\
    &=\sum_{m=1}^{M}\!\alpha_m\Big(\!\sqrt{\rho_{\UT}}{\mathbf{V}}_m^H\mathbf{G}_{\UT m}^H\mathbf{W}\bm{\Lambda}_\UR^{1/2}\mathbf{x}_{\URUT}+{\mathbf{V}}_m^H\bm{\omega}_{m}\nonumber\\
    &+{\mathbf{V}}_m^H\!\sum_{m'=1}^M\!(1-\alpha_{m'})\sqrt{\rho_\mathrm{J}}
    \mathbf{G}_{mm'}^H\hat{\mathbf{G}}_{m'\UR}\bm{\Pi}_{m'\UR}^{1/2}\mathbf{x}^\mathrm{J}\Big).
\end{align}
\section{Spectral Efficiency}\label{sec:se}
In this section, we derive the SE expressions for the untrusted communication link and the  surveillance system. First, a general formula for the SE with MMSE detection given arbitrary side information (which is  independent of the transmit signals)  is provided.  Subsequently, a closed-form SE expression at the \tUR~is derived, assuming perfect knowledge of the CSI of \tUT. For the surveillance system, two scenarios are investigated: \emph{1)} imperfect CSI knowledge at the MNs and no CSI
knowledge at the CPU; \emph{2)} imperfect CSI knowledge at both the MNs
and CPU.
\begin{proposition}
Given the received signal at the \tUR~and at the $m$th MN, in \eqref{eq:sigUR} and \eqref{eq:sigmt}, respectively, the achievable SE at R and at the CPU assuming MMSE detection can be written as
\begin{align}
    \mathrm{SE}_\UR&=\Big(\!1\!-\!\frac{\tau_\UT+\tau_\UR}{\tau}\!\Big)\esp\Bigl\{\log_2\left(\det\left(\mathbf{I}_{N_\UR}+\bm{\Upsilon}_\UR\right)\right)\Bigr\},\label{eq:SEr}\\
    \mathrm{SE}_\MN&=\Big(\!1\!-\!\frac{\tau_\UT+\tau_\UR}{\tau}\!\Big)\esp\Bigl\{\log_2\left(\det\left(\mathbf{I}_{N_\UR}+\bm{\Upsilon}_\MN\right)\right)\Bigr\},\label{eq:SEm}
\end{align}
where $\tau$ is the coherence interval, $\bm{\Upsilon}_\UR$ and $\bm{\Upsilon}_{\MN}$ are given by
\begin{align}
    \bm{\Upsilon}_\UR&=  \rho_\UT\esp\big\{\bm{\Lambda}_\UR^{1/2}\mathbf{A}_\UR^H|\bm{\Theta}_\UR\big\}(\bm{\Psi}_{\UR})^{-1}\esp\big\{\mathbf{A}_\UR\bm{\Lambda}_\UR^{1/2}|\bm{\Theta}_\UR\big\},\label{eq:upsilonr}\\
    \bm{\Upsilon}_{\MN}&=  \rho_\UT\esp\big\{\mathbf{D}_m^H|\bm{\Theta}_{\MN}\big\}(\bm{\Psi}_{\MN})^{-1}\esp\big\{\mathbf{D}_m|\bm{\Theta}_{\MN}\big\},
\end{align}
where $\bm{\Theta}_{\UR}$ and $\bm{\Theta}_{\MN}$ represent the side information, independent of $\mathbf{x}_{\URUT}$, and $\mathbf{D}_m$, $\bm{\Psi}_{\UR}$ and $\bm{\Psi}_{\MN}$ are given by  
\begin{align}
    \mathbf{D}_m&\triangleq\sum_{m=1}^M\alpha_m\mathbf{V}_m^H\mathbf{B}_m\bm{\Lambda}_\UR^{1/2},\\
    \bm{\Psi}_{\UR}&=\mathbf{I}_{N_\UR}+\rho_{\mathrm{J}}\esp\big\{\Frj(\Frj)^H\big\}+\rho_\UT\esp\big\{\mathbf{A}_\UR\bm{\Lambda}_\UR\mathbf{A}_\UR^H|\bm{\Theta}_\UR\big\}\nonumber\\
    &-\rho_\UT\esp\big\{\mathbf{A}_\UR\bm{\Lambda}_\UR^{1/2}|\bm{\Theta}_\UR\big\}\esp\big\{\mathbf{A}_\UR\bm{\Lambda}_\UR^{1/2}|\bm{\Theta}_\UR\big\}, \\
    \bm{\Psi}_{\MN}&=\rho_{\mathrm{J}}\esp\left\{\sum_{m=1}^{M}\sum_{l=1}^{M}\alpha_m\alpha_{l}\mathbf{V}_m^H\Fmj_m(\Fmj_l)^H\mathbf{V}_{l}|\bm{\Theta}_{\MN}\right\}\nonumber\\
    &+\rho_\UT\esp\Bigl\{\mathbf{D}_m\mathbf{D}_m^H|\bm{\Theta}_{\MN}\!\Bigr\}\!+\esp\Bigl\{\sum_{m=1}^{M}\sum_{l=1}^{M}\!\!\alpha_m\alpha_l{\mathbf{V}}_m^H{\mathbf{V}}_l|\bm{\Theta}_{\MN}\!\Bigr\}\nonumber\\
    &-\rho_\UT\esp\big\{\mathbf{D}_m|\bm{\Theta}_{\MN}\big\}\esp\big\{\mathbf{D}_m^H|\bm{\Theta}_{\MN}\big\},
\end{align}
respectively, where 
\begin{align}
    \Frj&\triangleq\sum_{m=1}^M\!(1-\alpha_m)\sqrt{\rho_\mathrm{J}}\mathbf{G}_{m\UR}^H\hat{\mathbf{G}}_{m\UR}\bm{\Pi}_{m\UR}^{1/2},\\
    \Fmj_m&\triangleq\!\sum_{m'=1}^M(1-\alpha_{m'})\mathbf{G}_{mm'}^H\hat{\mathbf{G}}_{m'\UR}\bm{\Pi}_{m'\UR}^{1/2}.
\end{align}
\end{proposition}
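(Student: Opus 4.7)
The plan is to establish both expressions via the standard ``use-and-then-forget'' bound for MMSE reception with side information, which is the natural refinement of the capacity bound used throughout the massive MIMO literature. First, I recast the two receivers in a common linear form
\begin{equation*}
\mathbf{y} = \sqrt{\rho_\UT}\,\mathbf{H}_{\text{eff}}\,\mathbf{x}_{\URUT} + \mathbf{n}_{\text{eff}},
\end{equation*}
where $\mathbf{H}_{\text{eff}} = \mathbf{A}_\UR\bm{\Lambda}_\UR^{1/2}$ in \eqref{eq:sigUR} at the \tUR{}, and $\mathbf{H}_{\text{eff}}$ is obtained by collecting all $\alpha_m \mathbf{V}_m^H \mathbf{B}_m \bm{\Lambda}_\UR^{1/2}$ contributions in \eqref{eq:tildexrt} at the CPU, so that $\sqrt{\rho_\UT}\mathbf{H}_{\text{eff}} = \sqrt{\rho_\UT}\mathbf{D}_m$. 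The residual $\mathbf{n}_{\text{eff}}$ gathers the jamming leakage ($\Frj$ at \tUR{}, the MN-to-MN jamming term with $\Fmj_m$ at the CPU), the thermal noise, and, at the CPU, the combined noise $\sum_{m}\alpha_m\mathbf{V}_m^H\bm{\omega}_m$.

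Next, I add and subtract the conditional mean: write $\mathbf{H}_{\text{eff}} = \esp\{\mathbf{H}_{\text{eff}}|\bm{\Theta}\} + \bigl(\mathbf{H}_{\text{eff}} - \esp\{\mathbf{H}_{\text{eff}}|\bm{\Theta}\}\bigr)$, where $\bm{\Theta}$ is the side information allowed at the receiver ($\bm{\Theta}_\UR$ at \tUR{}, $\bm{\Theta}_\MN$ at the CPU). The first piece is treated as a known deterministic channel once $\bm{\Theta}$ is fixed; the fluctuation multiplied by $\mathbf{x}_{\URUT}$ is zero-mean given $\bm{\Theta}$ and is absorbed into an equivalent noise $\tilde{\mathbf{n}}$. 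A direct second-moment calculation then yields the conditional covariance of $\tilde{\mathbf{n}}$: the signal-induced fluctuation contributes $\rho_\UT\bigl(\esp\{\mathbf{H}_{\text{eff}}\mathbf{H}_{\text{eff}}^H|\bm{\Theta}\}-\esp\{\mathbf{H}_{\text{eff}}|\bm{\Theta}\}\esp\{\mathbf{H}_{\text{eff}}^H|\bm{\Theta}\}\bigr)$, the jamming contributes $\rho_\mathrm{J}\esp\{\Frj(\Frj)^H\}$ at \tUR{} and the $\Fmj$-double sum at the CPU, and the filtered thermal noise contributes $\mathbf{I}_{\NR}$ or the CPU cross-term $\sum_{m,l}\alpha_m\alpha_l\mathbf{V}_m^H\mathbf{V}_l$. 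Summing the three pieces reproduces $\bm{\Psi}_\UR$ and $\bm{\Psi}_\MN$ exactly.

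Because $\mathbf{x}_{\URUT}\sim\mathcal{CN}(\mathbf{0},\mathbf{I}_\NR)$ is independent of $\bm{\Theta}$ and, by construction, uncorrelated with $\tilde{\mathbf{n}}$, I invoke the worst-case uncorrelated-noise bound (treating $\tilde{\mathbf{n}}$ as Gaussian with the derived covariance) and compute the mutual information $I(\mathbf{x}_{\URUT};\mathbf{y}\,|\,\bm{\Theta})$ under linear MMSE filtering. This produces the conditional rate $\log_2\det(\mathbf{I}_{\NR}+\bm{\Upsilon})$ with the MMSE SINR matrix $\bm{\Upsilon} = \rho_\UT\esp\{\mathbf{H}_{\text{eff}}^H|\bm{\Theta}\}\bm{\Psi}^{-1}\esp\{\mathbf{H}_{\text{eff}}|\bm{\Theta}\}$, matching \eqref{eq:upsilonr} and its CPU counterpart. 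Averaging over $\bm{\Theta}$ and scaling by the data fraction $1-(\tau_\UT+\tau_\UR)/\tau$, which accounts for the two pilot phases of Sections~II-A and II-B, completes \eqref{eq:SEr}--\eqref{eq:SEm}.

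The main obstacle I anticipate is the bookkeeping of the cross-terms at the CPU. Unlike the \tUR{} case, the CPU-level jamming and noise pass through MN-dependent combiners $\mathbf{V}_m$ that themselves depend on the estimates $\hat{\mathbf{B}}_m$, so the conditional covariance contains double sums over $(m,l)$ that do not reduce to a single-index expression. Care will be needed to verify that $\mathbf{x}^\mathrm{J}$ and $\mathbf{x}_{\URUT}$ are uncorrelated with each other and with $\bm{\Theta}_\MN$, and to keep $\mathbf{V}_m,\mathbf{V}_l,\mathbf{G}_{mm'}$, and $\hat{\mathbf{G}}_{m'\UR}$ inside the conditional expectation so that the stated form of $\bm{\Psi}_\MN$ emerges. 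Once these cross-term covariances are verified, the proposition follows by direct substitution.
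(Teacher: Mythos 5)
Your proposal is correct and follows essentially the same route as the paper's Appendix~B: both treat $\esp\{\mathbf{H}_{\text{eff}}|\bm{\Theta}\}$ as the known channel and everything else as worst-case uncorrelated Gaussian noise (the Hassibi--Hochwald bound the paper cites), with your $\bm{\Psi}$ being exactly the paper's $\mathbf{C}_{\mathbf{y}\mathbf{y}|\bm{\Theta}}-\rho_\UT\esp\{\mathbf{H}_{\text{eff}}|\bm{\Theta}\}\esp\{\mathbf{H}_{\text{eff}}^H|\bm{\Theta}\}$. The only cosmetic difference is that the paper phrases the bound as a differential-entropy difference and reaches $\log_2\det(\mathbf{I}_{N_\UR}+\bm{\Upsilon})$ from the LMMSE error covariance via the matrix inversion lemma, whereas you compute the SINR matrix directly from the mean-plus-fluctuation decomposition.
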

\begin{proof}
   The proof is provided in Appendix B. 
\end{proof}
Now, for the untrusted link, we examine the SE performance when the \tUR~has perfect knowledge of the effective untrusted channel, which represents the worst-case scenario from a monitoring performance perspective. Consequently, we can obtain the following closed-form expression for the SE.
\begin{proposition}
The SE at the \tUR, assuming perfect knowledge of the effective untrusted channel, is given by
\vspace{-0.5em}
\begin{align}\label{eq:ser}
    \mathrm{SE}_\UR=\Big(1-\frac{\tau_\UT+\tau_\UR}{\tau}\Big)\esp\Big\{\log_2\Big(1+\sum_{\nr=1}^{\NR}\Gamma_\nr\Big)\Big\},
\end{align}
where $\Gamma_\nr$ is the signal-to-interference-plus-noise ratio (SINR) at the $\nr$th receive antenna of \tUR, given by
    \begin{align}\label{eq:sinrnr}
         \Gamma_\nr = \frac{\rho_\UT\lambda_{\nr}|a_{\nr,\nr}|^2}{1+\rho_\UT\lambda_{\nrl}\sum_{\nrl=1 \atop \nrl\neq \nr}^{\NR}|a_{\nr,\nrl}|^2+\rho_{\mathrm{J}}\mathcal{I}},
    \end{align}
    \vspace{-1em}
    with 
    \vspace{-1em}
    \begin{align}
        \mathcal{I}&=\NM\sum_{\nrl=1}^\NR\sum_{m=1}^{M}(1-\alpha_m){\pi}_{m,\nrl}\gamma_{m\UR}\nonumber\\
        &+\NM^2\left(\sum_{m=1}^{M}(1-\alpha_m)\sqrt{\pi_{m,\nr}}\gamma_{m\UR}\right)^2.
    \end{align}
\end{proposition}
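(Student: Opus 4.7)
The plan is to start from the received signal at \tUR~in~\eqref{eq:sigUR} and exploit the assumption that the effective untrusted channel $\mathbf{A}_\UR=\mathbf{G}_{\UTUR}^H\mathbf{W}$ is perfectly known at \tUR. Writing the $\nr$th entry of $\mathbf{y}_\UR$, I would identify the desired symbol term $\sqrt{\rho_\UT\lambda_\nr}\,a_{\nr,\nr}\,x_{\URUT,\nr}$ with conditional power $\rho_\UT\lambda_\nr|a_{\nr,\nr}|^2$, the inter-stream interference $\sqrt{\rho_\UT}\sum_{\nrl\ne\nr}\sqrt{\lambda_{\nrl}}\,a_{\nr,\nrl}\,x_{\URUT,\nrl}$ with power $\rho_\UT\sum_{\nrl\ne\nr}\lambda_{\nrl}|a_{\nr,\nrl}|^2$, and the unit-variance noise $\omega_{\UR,\nr}$. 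What remains is the statistics of the jamming term $[\Frj\mathbf{x}^{\mathrm{J}}]_\nr$, which is independent of $\mathbf{A}_\UR$ since the MN-to-\tUR~and the \tUT-to-\tUR~channels are mutually independent.

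The main technical step is the computation of $\esp\{|[\Frj\mathbf{x}^{\mathrm{J}}]_\nr|^2\}$. Using $\mathbf{x}^{\mathrm{J}}\sim\mathcal{CN}(\mathbf{0},\mathbf{I}_\NR)$ independently of all channels, this reduces to $\sum_{\nrl=1}^{\NR}\esp\{|[\Frj]_{\nr,\nrl}|^2\}$, where
\begin{align*}
    [\Frj]_{\nr,\nrl}=\sqrt{\rho_\mathrm{J}}\sum_{m=1}^{M}(1-\alpha_m)\sqrt{\pi_{m,\nrl}}\,\mathbf{g}_{m\UR,\nr}^H\hat{\mathbf{g}}_{m\UR,\nrl}.
\end{align*}
I would substitute the orthogonal MMSE decomposition $\mathbf{g}_{m\UR,\nr}=\hat{\mathbf{g}}_{m\UR,\nr}+\mathbf{e}_{m\UR,\nr}$, with $\mathbf{e}_{m\UR,\nr}$ independent of $\hat{\mathbf{g}}_{m\UR,\nr}$, and invoke two independence properties: the columns of $\hat{\mathbf{G}}_{m\UR}$ are pairwise independent because the uplink pilots used in Section~\ref{sec:upt} are mutually orthogonal, and channels to distinct MNs are independent.

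Splitting the $\nrl$-sum into the $\nrl\ne\nr$ and $\nrl=\nr$ cases produces the two pieces of $\mathcal{I}$. For $\nrl\ne\nr$, the scalar $\mathbf{g}_{m\UR,\nr}^H\hat{\mathbf{g}}_{m\UR,\nrl}$ has zero mean, so upon expanding $|\sum_m\cdot|^2$ only the diagonal $m=m'$ terms survive, each contributing a variance of order $\NM\gamma_{m\UR}$; these assemble into the $\NM$-scaled sum in $\mathcal{I}$. For $\nrl=\nr$, the quantity $\mathbf{g}_{m\UR,\nr}^H\hat{\mathbf{g}}_{m\UR,\nr}=\|\hat{\mathbf{g}}_{m\UR,\nr}\|^2+\mathbf{e}_{m\UR,\nr}^H\hat{\mathbf{g}}_{m\UR,\nr}$ has a nonzero mean proportional to $\NM\gamma_{m\UR}$, reflecting the coherent beamforming gain of MR precoding, so the off-diagonal $m\ne m'$ cross terms survive and assemble into the $\NM^2\bigl(\sum_m(1-\alpha_m)\sqrt{\pi_{m,\nr}}\gamma_{m\UR}\bigr)^2$ contribution, while the residual $m=m'$ variance piece is lower-order in $\NM$ and is absorbed into the first sum.

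With the jamming power identified as $\rho_\mathrm{J}\mathcal{I}$, dividing the desired signal power by the sum of noise, inter-stream and jamming contributions yields $\Gamma_\nr$ as in~\eqref{eq:sinrnr}. The SE expression~\eqref{eq:ser} then follows by combining the per-stream SINRs into the corresponding achievable rate formula under the worst-case Gaussian interference assumption, and multiplying by the pilot-overhead factor $1-(\tau_\UT+\tau_\UR)/\tau$ that accounts for both the uplink and beamforming training phases. The hardest step, in my view, is the moment bookkeeping for the MR-precoded jamming: carefully tracing which cross-MN and cross-antenna pairs survive the expectation, and verifying that the coherent mean-squared and incoherent variance contributions combine into precisely the $\NM^2$ and $\NM$ scalings of $\mathcal{I}$.
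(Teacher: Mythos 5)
Your outline is structurally sound and is essentially the standard derivation; note that the paper itself gives no proof here, deferring entirely to \cite[Appendix~A]{paperzahra}, so there is no in-paper argument to compare against. Your decomposition of $\mathbf{y}_\UR$ conditioned on perfect knowledge of $\mathbf{A}_\UR$, the replacement of the (independent) MR-precoded jamming term by its average power under the worst-case Gaussian assumption, and the split of $\esp\{|[\Frj\mathbf{x}^{\mathrm{J}}]_\nr|^2\}$ into incoherent ($\NM$-scaling) and coherent ($\NM^2$-scaling) pieces are exactly the right moves. Two fine points. First, on the $\nrl=\nr$ diagonal terms: the residual variance of $\mathbf{g}_{m\UR,\nr}^H\hat{\mathbf{g}}_{m\UR,\nr}$ about its mean is not merely ``lower-order and absorbed'' --- it works out to exactly the same quantity $\NM\beta_{m\UR}\,\esp\{|[\hat{\mathbf{g}}_{m\UR,\nr}]_i|^2\}$ as each $\nrl\neq\nr$ term, which is precisely why the first sum in $\mathcal{I}$ runs over all $\nrl=1,\dots,\NR$ including $\nrl=\nr$; it is worth verifying this identity rather than asserting an order argument. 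Second, your raw moments give $\esp\{\|\hat{\mathbf{g}}_{m\UR,\nr}\|^2\}=\NM\sqrt{\tau_\UR\rho_\UR}\,\beta_{m\UR}\gamma_{m\UR}$, whereas the stated $\mathcal{I}$ carries only $\NM\gamma_{m\UR}$; to land on the proposition exactly you must track how the power-normalization coefficients $\pi_{m,\nrl}$ (chosen so that $\esp\{\|\mathbf{s}_m^{\mathrm{J}}\|^2\}\leq\rho_{\mathrm{J}}$) absorb the factor $\sqrt{\tau_\UR\rho_\UR}\,\beta_{m\UR}$, a bookkeeping step your sketch does not make explicit. Neither point invalidates the approach.
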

\begin{proof}
   The proof follows similar steps as \cite[Appendix A]{paperzahra}.
\end{proof}
\vspace{-0.6em}
In addition, for the surveillance system, we consider two scenarios based on the availability of side information, as outlined in the next subsections.\footnote{In Section \ref{sec:case1}, the superscript (1) stands for the scenario with imperfect CSI at the MNs and no CSI knowledge at the CPU, while in Section~\ref{sec:case2}, the superscript (2) stands for the scenario with imperfect CSI knowledge at the MNs and at the CPU.}
\vspace{-0.5em}
\subsection{Imperfect CSI Knowledge at the MNs and No CSI Knowledge at the CPU}~\label{sec:case1}
In this case, even though the estimates of the effective channel between the MNs and the untrusted nodes are computed at each MN, we assume that this information is not forwarded to the CPU, hence $\bm{\Theta}_{\MN}^{(1)}=\varnothing$. Thus, \eqref{eq:SEm} is rewritten as
\begin{align}\label{eq:se1}
     \mathrm{SE}_{\MN}^{(1)}=\Big(1-\frac{\tau_\UT+\tau_\UR}{\tau}\Big)\log_2\left(\det\left(\mathbf{I}_{N_\UR}+\bm{\Upsilon}_{\MN}^{(1)}\right)\right),
\end{align}
where $\bm{\Upsilon}_{\MN}^{(1)}=\rho_\UT\esp\big\{\mathbf{D}_m^H\big\}(\bm{\Psi}_{\MN}^{(1)})^{-1}\esp\big\{\mathbf{D}_m\big\}$
with  
\vspace{-0.5em}
\begin{align}
    \bm{\Psi}_{\MN}^{(1)}\!&=\rho_{\mathrm{J}}\esp\left\{\sum_{m=1}^{M}\sum_{l=1}^{M}\alpha_m\alpha_{l}\mathbf{V}_m^H\Fmj_m(\Fmj_l)^H\mathbf{V}_{l}\!\right\}\nonumber\\&+\!\rho_\UT\esp\Bigl\{\mathbf{D}_m\mathbf{D}_m^H\!\Bigr\}+\esp\left\{\sum_{m=1}^{M}\sum_{l=1}^{M}\alpha_m\alpha_l{\mathbf{V}}_m^H{\mathbf{V}}_l\right\}\nonumber\\
    &-\rho_\UT\esp\big\{\mathbf{D}_m\big\}\esp\big\{\mathbf{D}_m^H\big\}.
\end{align}
\subsection{Imperfect CSI Knowledge at the MNs and at the CPU}~\label{sec:case2}
In this scenario, we assume that the estimates of the CSI of the \tUT~computed by the MNs in observing mode in \eqref{eq:hatbm} are forwarded to the CPU. Therefore, $\bm{\Theta}_{\MN}^{(2)}=[\hat{\mathbf{B}}_1,\dots,\hat{\mathbf{B}}_{M}]$. Note that the elements of $\mathbf{B}_m$ are Gaussian distributed, and thus the MMSE estimates $\hat{\mathbf{B}}_m$ and corresponding estimation error $\Tilde{\mathbf{B}}_m\triangleq{\mathbf{B}}_m-\hat{\mathbf{B}}_m$ are uncorrelated and independent.  Let $\hat{\mathbf{D}}_m\triangleq\sum_{m=1}^M\alpha_m\mathbf{V}_m^H\hat{\mathbf{B}}_m\bm{\Lambda}_\UR^{1/2}$. Hence, using~\eqref{eq:SEm}, the achievable SE at 
the CPU  can be obtained as
\begin{align}\label{eq:se2}
     \mathrm{SE}_{\MN}^{(2)}\!=\!\Big(1-\frac{\tau_\UT+\tau_\UR}{\tau}\Big)\esp\big\{\log_2\big(\det\big(\mathbf{I}_{N_\UR}+\bm{\Upsilon}_{\MN}^{(2)}\big)\big)\big\},
\end{align}
where $\bm{\Upsilon}_{\MN}^{(2)}=\rho_\UT\hat{\mathbf{D}}_m^H(\bm{\Psi}_{\MN}^{(2)})^{-1}\hat{\mathbf{D}}_m$ 
with
\begin{align}
\bm{\Psi}_{\MN}^{(2)}&=\rho_{\mathrm{J}}\sum_{m=1}^{M}\sum_{l=1}^{M}\alpha_m\alpha_l{\mathbf{V}}_m^H\esp\big\{\Fmj_m(\Fmj_l)^H\big\}{\mathbf{V}}_l\nonumber\\
    &+\sum_{m=1}^{M}\sum_{l=1}^{M}\alpha_m\alpha_l{\mathbf{V}}_m^H{\mathbf{V}}_l+\rho_\UT\esp\big\{\Tilde{\mathbf{D}}_m\Tilde{\mathbf{D}}_m^H\big\},
\end{align}
and  $\Tilde{\mathbf{D}}_m\triangleq\mathbf{D}_m-\hat{\mathbf{D}}_m$.
\subsection{Monitoring Success Probability}
The CPU can reliably monitor the untrusted communication between the \tUT~and \tUR~if $\mathrm{SE}_{\MN}^{(\mathrm{f})}\geq\mathrm{SE}_\UR, \mathrm{f} \in \{1,2\}$, with $\mathrm{SE}_{\MN}^{(1)}$, $\mathrm{SE}_{\MN}^{(2)}$ and $\mathrm{SE}_\UR$ given by \eqref{eq:se1}, \eqref{eq:se2} and \eqref{eq:ser}, respectively. Thus, the MSP can be computed as~\cite{paperzahra} 
\begin{align}
    \mathrm{MSP}^{(\mathrm{f})}=\Pr(\mathrm{SE}_{\MN}^{(\mathrm{f})}\geq\mathrm{SE}_\UR). 
\end{align}
\section{Numerical Results}\label{sec:Results}
In this section, the performance of the considered CF-mMIMO surveillance system is evaluated in terms of the MSP. Zero-forcing (ZF) and  maximum-ratio transmission (MRT) precoding techniques are considered for the transmit precoding matrix $\mathbf{W}$ design at the \tUT. The MNs, \tUT, and \tUR~are uniformly distributed within a $D\times D$ km$^2$ area. The wrapped-around technique is used to avoid the boundary effects. The values for bandwidth, transmit pilot power, pilot length, large-scale fading, and noise are extracted from~\cite{7827017}. Moreover, $\NT=\NR=4$, the transmit power for each MN in jamming mode is set as $\rho_{\mathrm{J}}=1$ W, and the per stream SNR is $\varrho=1/\rho_\UT$.

\begin{figure}[t]
    \centering
    \includegraphics[scale=0.37]{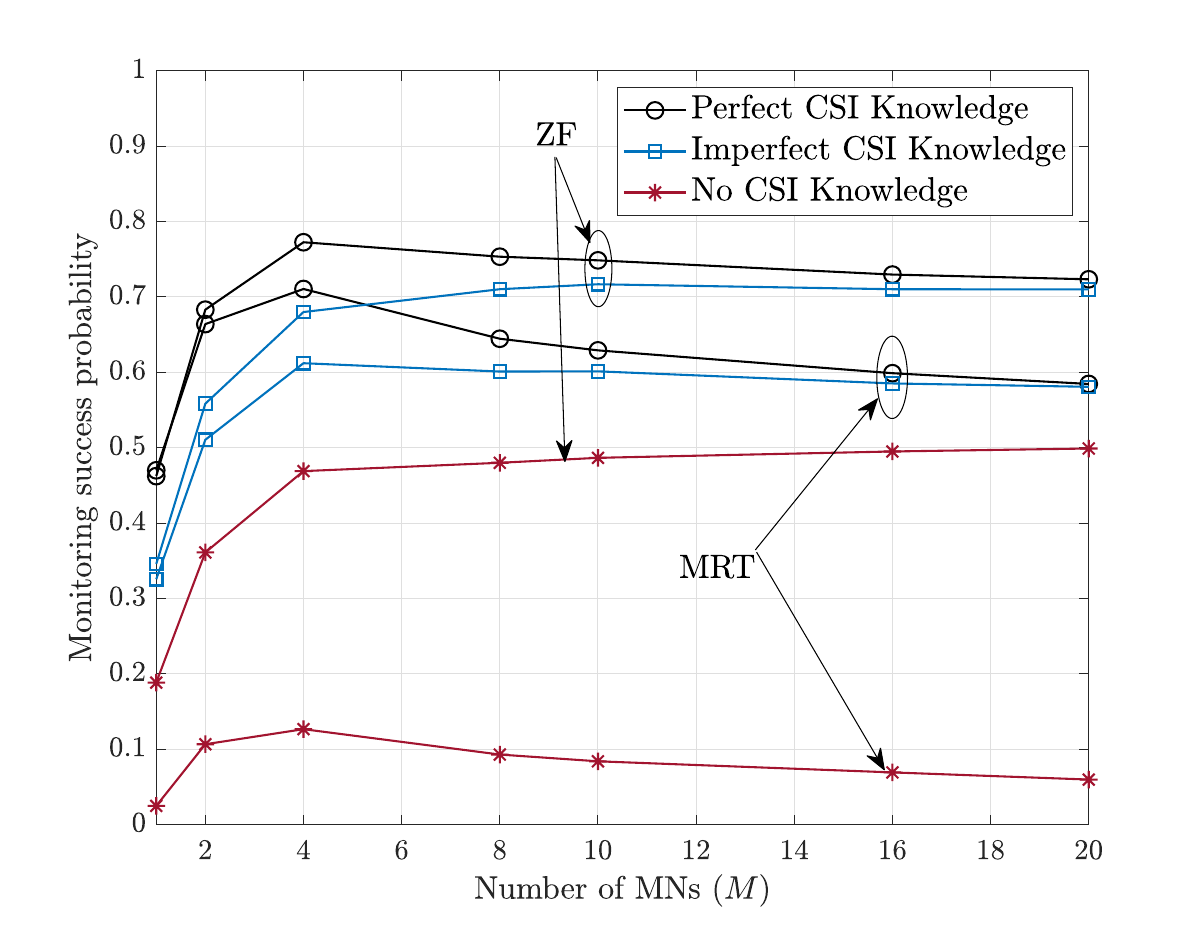}
    \caption{Monitoring success probability  versus the number of monitoring nodes, $M$, with $N_{\mathrm{MT}}=240$.}
    \label{fig:nmns}
\end{figure}
\balance
Figure~\ref{fig:nmns} illustrates the MSP versus the number of MNs for different scenarios of CSI availability, namely imperfect CSI knowledge at the MNs and no CSI knowledge at the CPU, and imperfect CSI knowledge at both the MNs and  CPU. In Fig.~\ref{fig:nmns}, the total number of antennas at all the MNs is fixed as $N_{\mathrm{MT}}=240$ and $D=1$ km. For performance comparison, we also illustrate the results for the ideal monitoring scenario, with perfect CSI knowledge of the untrusted communication link at both the MNs and   CPU. As expected,   perfect CSI knowledge presents an upper bound on the MSP performance. Nonetheless, the performance gap between this scenario and other scenarios relying on the proposed CSI acquisition approach is small, especially for a high number of MNs. This highlights the effectiveness of our proposed acquisition approaches.
Moreover, we observe that as the number of MNs in the surveillance system increases, the MSP remains relatively stable for all cases. This is due to the fact that increasing $M$, and, hence, decreasing $\NM$, has two effects on the MSP performance 1) decreases the diversity and array gains, and 2) increases the macro-diversity gain and decreases the path loss. 
Furthermore, it can be observed that the CF-mMIMO system demonstrates superior monitoring performance in scenarios where the \tUT~utilizes the ZF precoding design compared to those employing the MRT design.  In addition, the CF-mMIMO surveillance system with imperfect CSI knowledge at both the MNs and  CPU  
provides performance gains of up to $200\%$ and $450\%$ for ZF and MRT precoding design at the \tUT, respectively, in comparison to the system with imperfect CSI knowledge at MNs and no CSI knowledge at the CPU.


\begin{figure}[t]
    \centering
\includegraphics[scale=0.32]{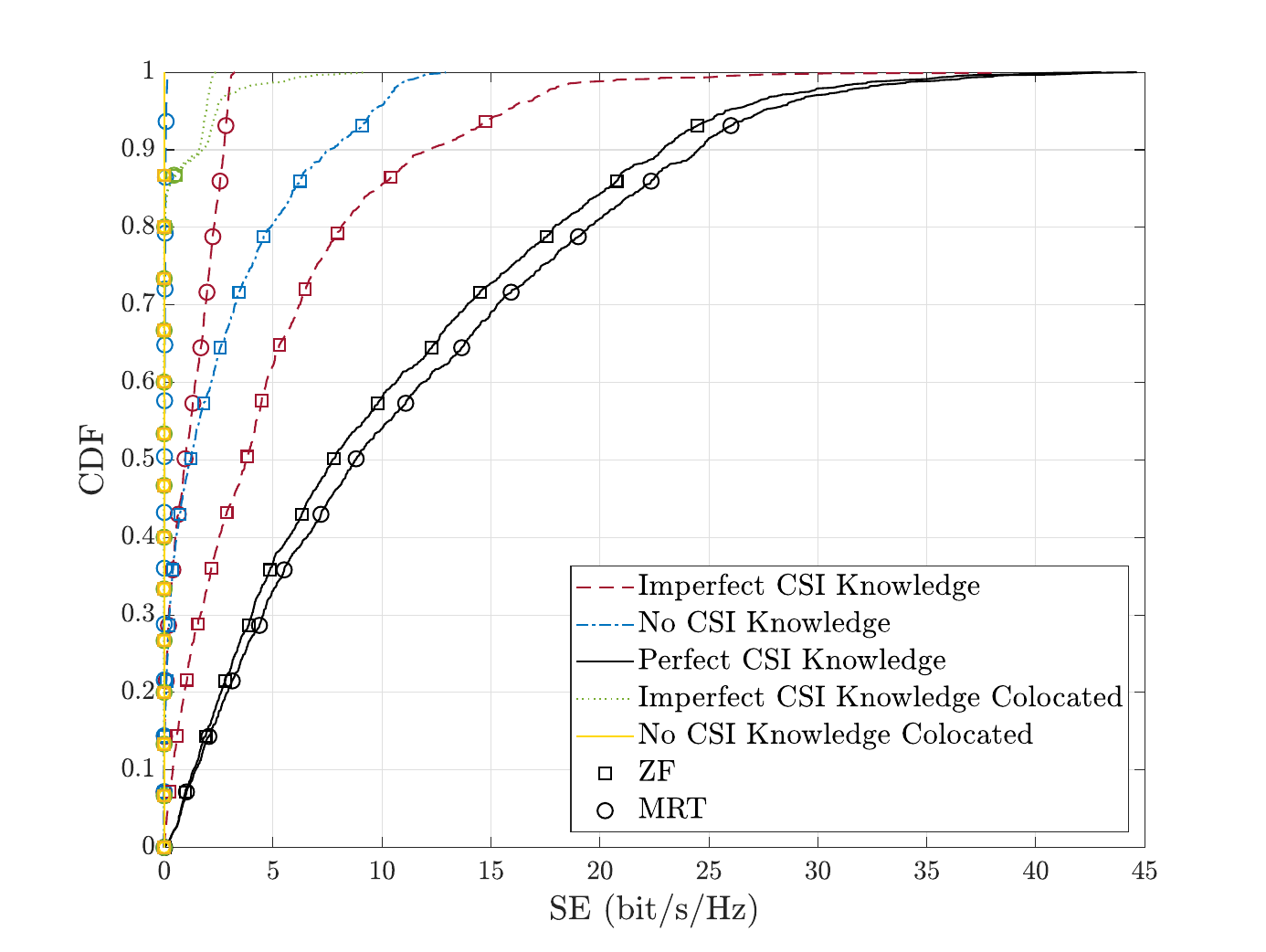}
    \vspace{-0.45em}
    \caption{CDF of the  SE at the CPU, with $M=4$ and $\NM=60$.}
    \label{fig:cdf}
      \vspace{-0.5cm}
\end{figure}

Figure~\ref{fig:cdf} illustrates the cumulative distribution function (CDF) of the average SE at the CPU. For performance comparison, the results of a CF-mMIMO surveillance system with perfect CSI knowledge at the CPU and a colocated mMIMO surveillance system are also presented. Here,  $D = 0.3$ km and  the residual self-interference for the colocated mMIMO surveillance system  is assumed as $30$ dB. Also, half of the antennas are employed for jamming purposes, and the rest are used for observing the untrusted communication link. Figure~\ref{fig:cdf} also shows that the monitoring performance attained with a CF-mMIMO surveillance system along with our proposed CSI acquisition scheme significantly outperforms the colocated mMIMO surveillance system. This is due to the fact that the CF-mMIMO system benefits from higher macro-diversity and lower path loss, while the colocated system suffers from self-interference between the transmit and receive antennas. 

\section{Conclusions}\label{sec:Conclusions}
We investigated the CSI acquisition of a multi-antenna untrusted link in a CF-mMIMO surveillance system relying on multiple multi-antenna MNs.
In our proposed scheme, by exploiting the pilot signals during both the uplink and downlink phases of the untrusted link and MMSE estimation, the MNs are capable of estimating the effective channels of both the  UT and UR. The monitoring performance of the CF-mMIMO surveillance system was evaluated for two scenarios: \emph{1}) imperfect CSI knowledge at both the MNs and CPU; \emph{2}) imperfect CSI knowledge at the MNs and no CSI knowledge at the CPU. Simulation results showed that the proposed CSI acquisition scheme is effective.  Moreover,  we confirmed that the  CF-mMIMO surveillance system along with our proposed CSI acquisition scheme significantly outperforms the colocated mMIMO surveillance system.  
\section*{Appendix A \\Proof of Proposition~1}
The MMSE estimate of $\bmest$ can be computed as
\begin{align}\label{eq:bma}
\hat{\mathbf{b}}_{\nm}&=\esp\bigl\{\bmest\bigr\}+\mathbf{C}_{\bmest,\ymtest}\mathbf{C}_{\ymtest\ymtest}^{-1}\left(\ymtest-\esp\bigl\{\ymtest\bigr\}\right),
\end{align}
where $\mathbf{C}_{\bmest,\ymtest}$ is given by
\begin{align}\label{eq:cbm1}
    \mathbf{C}_{\bmest,\ymtest}=\esp\left\{\left(\bmest-\esp\bigl\{\bmest\bigr\}\right)\left(\ymtest-\esp\bigl\{\ymtest\bigr\}\right)^T\right\}. 
\end{align}
Considering $\omegam$ as the $p$th column of $\Tilde{\bm{\Omega}}_{m}$, which is i.i.d. $\mathcal{CN}(0,1)$ and independent of $\bmest$, \eqref{eq:cbm1} can be rewritten as
\begin{align}\label{eq:cbm2}
   \mathbf{C}_{\bmest,\ymtest} &=\sqrt{\tau_\UT\rho_\UT}\esp\Bigl\{\bmest\bmest^T\!-\!\bmest\esp\bigl\{\bmest^T\bigr\}\!-\!\esp\{\bmest\}\bmest^T\nonumber\\&+\!\esp\bigl\{\bmest\bigr\}\esp\bigl\{\bmest^T\bigr\}\Bigr\}=\sqrt{\tau_\UT\rho_\UT}\mathbf{C}_{\bmest,\bmest}.
\end{align}
Analogously to \eqref{eq:cbm1}, $\mathbf{C}_{\ymtest,\ymtest}$ can be rewritten as
\begin{align}\label{eq:cym}
     \mathbf{C}_{\ymtest,\ymtest}&=\esp\left\{\left(\ymtest-\esp\bigl\{\ymtest\bigr\}\right)\left(\ymtest-\esp\bigl\{\ymtest\bigr\}\right)^T\right\} \nonumber\\
    &=\tau_\UT\rho_\UT\esp\Bigl\{\bmest\bmest^T\!-\!\bmest\esp\bigl\{\bmest^T\bigr\}\!-\!\esp\bigl\{\bmest\bigr\}\bmest^T\nonumber\\
    &+\!\esp\bigl\{\bmest\bigr\}\esp\bigl\{\bmest^T\bigr\}\Bigr\}\!+\!\esp\Bigl\{\omegam\omegam^T\!\!\!+\!\esp\bigl\{\omegam\bigr\}\esp\bigl\{\omegam^T\bigr\}\Bigr\}\nonumber\\
    &=\tau_\UT\rho_\UT\mathbf{C}_{\bmest,\bmest}\!+\!\mathbf{I}_{\NR}.
\end{align}
By replacing \eqref{eq:cbm2} and \eqref{eq:cym} into \eqref{eq:bma} we obtain \eqref{eq:hatbm}. 
\balance
\section*{Appendix B \\Proof of Proposition~2}
By denoting the differential entropy as $h(\cdot)$, the mutual information between $\mathbf{x}_\URUT$ and $\mathbf{y}_\UR$ is defined as
\begin{align}\label{eq:ixyr}
    I(\mathbf{x}_\URUT;\mathbf{y}_\UR,\bm{\Theta}_\UR)=h(\mathbf{x}_\URUT|\bm{\Theta}_\UR)-h(\mathbf{x}_\URUT|\mathbf{y}_\UR,\bm{\Theta}_\UR).
\end{align}
Following~\cite[Appendix C]{9079911}, under Gaussian signaling, we obtain $h(\mathbf{x}_\URUT|\bm{\Theta}_\UR)=\log_2(\det(\pi e \mathbf{I}_\NR))$. Next, following~\cite[Appendix I]{1624653}, $h(\mathbf{x}_\URUT|\mathbf{y}_\UR,\bm{\Theta}_\UR)$ is upper bounded by
\begin{align}
    h(\mathbf{x}_\URUT|\mathbf{y}_\UR,\bm{\Theta}_\UR)&\leq\esp\Bigl\{\log_2\left(\det\left(\pi e \esp\bigl\{\epsilon_\URUT \epsilon_\URUT^H|\bm{\Theta}_\UR\bigr\}\right)\right)\Bigr\},\label{eq:hxy}
\end{align}
where $\epsilon_\URUT$ is the MMSE estimation error of $\mathbf{x}_\URUT$ given $\mathbf{y}_\UR$ and $\bm{\Theta}_\UR$. Accordingly, $\esp\{\epsilon_\URUT \epsilon_\URUT^H|\bm{\Theta}_\UR\}$ is computed as 
\begin{align}~\label{eq:xuruthat}
    \esp\{\epsilon_\URUT \epsilon_\URUT^H|\bm{\Theta}_\UR\}&=\mathbf{C}_{\mathbf{x}_\URUT\mathbf{x}_\URUT|\bm{\Theta}_\UR}-\mathbf{C}_{\mathbf{x}_\URUT\mathbf{y}_\UR|\bm{\Theta}_\UR}\mathbf{C}_{\mathbf{y}_\UR\mathbf{y}_\UR|\bm{\Theta}_\UR}^{-1}\mathbf{C}_{\mathbf{y}_\UR\mathbf{x}_\URUT|\bm{\Theta}_\UR}.
\end{align}

The covariance matrices in~\eqref{eq:xuruthat} are calculated as
\begin{align}
\mathbf{C}_{\mathbf{x}_\URUT\mathbf{x}_\URUT|\bm{\Theta}_\UR}&=\esp\Bigl\{\mathbf{x}_\URUT\mathbf{x}_\URUT^H|\bm{\Theta}_\UR\Bigr\}\!=\!\mathbf{I}_\NR,\label{eq:cxurxur}\\
    \mathbf{C}_{\mathbf{x}_\URUT\mathbf{y}_\UR|\bm{\Theta}_\UR}&=\esp\Bigl\{\mathbf{x}_\URUT\mathbf{y}_\UR^H|\bm{\Theta}_\UR\Bigr\}\!=\!\sqrt{\rho_\UT}\esp\Bigl\{\bm{\Lambda}_\UR^{1/2}\!\mathbf{A}_\UR^H|\bm{\Theta}_\UR\Bigr\},\label{eq:cxuryr}\\
    \mathbf{C}_{\mathbf{y}_\UR\mathbf{y}_\UR|\bm{\Theta}_\UR}&=\esp\Bigl\{\mathbf{y}_\UR\mathbf{y}_\UR^H|\bm{\Theta}_\UR\Bigr\}\!=\!\mathbf{I}_{N_\UR}\!\!+\!\rho_{\mathrm{J}}\esp\Bigl\{\Frj(\Frj)^H\!\Bigr\}\nonumber\\\!&+\!\rho_\UT\esp\Bigl\{\mathbf{A}_\UR\bm{\Lambda}_\UR\mathbf{A}_\UR^H|\bm{\Theta}_\UR\Bigr\},\label{eq:cyryr}\\
   \mathbf{C}_{\mathbf{y}_\UR\mathbf{x}_\URUT|\bm{\Theta}_\UR}&=\mathbf{C}_{\mathbf{x}_\URUT\mathbf{y}_\UR|\bm{\Theta}_\UR}^H=\sqrt{\rho_\UT}\esp\Bigl\{\mathbf{A}_\UR\bm{\Lambda}_\UR^{1/2}|\bm{\Theta}_\UR\Bigr\}\label{eq:yrxur}.
\end{align}
By plugging (\ref{eq:cxurxur})-(\ref{eq:yrxur}) into \eqref{eq:xuruthat}, and then replacing $h(\mathbf{x}_\URUT|\bm{\Theta}_\UR)$ and \eqref{eq:hxy} into \eqref{eq:ixyr}, the SE at R can be computed as in \eqref{eq:SEr} by employing the matrix inversion lemma. Analogously, to obtain the SE at the CPU, the mutual information must be computed between $\mathbf{x}_\URUT$ and ${\mathbf{z}}_\MN$ in \eqref{eq:tildexrt}. 
\balance
\bibliographystyle{IEEEtran}  
\bibliography{references}

\end{document}